\documentclass{article}

\usepackage{amsmath}
\usepackage{amsfonts,amssymb}
\usepackage{eucal}
\usepackage{amsthm}
\usepackage{graphicx}
\usepackage{color} 

\newcommand{\simgt}{\lower.5ex\hbox{$\; \buildrel > \over \sim \;$}}
\newcommand{\simlt}{\lower.5ex\hbox{$\; \buildrel < \over \sim \;$}}





\newcommand{\ord}{\text{\rm ord}}
\newcommand{\Var}{\text{\rm Var}}

\newcommand{\Mod}{\text{\rm mod }}
\newcommand{\lcm}{\text{\rm lcm}}


\newtheorem{thm}{Theorem}

\newtheorem{defn}[thm]{Definition}

%

\hyphenation{op-tical net-works semi-conduc-tor}

\begin{document}
%
\title{Double Counting in $2^t$-ary RSA Precomputation Reveals the Secret Exponent}
%
%
%

\author{Masahiro~Kaminaga, 
        Hideki~Yoshikawa,
        and~Toshinori~Suzuki,
\thanks{M. Kaminaga, H. Yoshikawa, and T. Suzuki are with the Department 
of Electrical Engineering and Information Technology, Tohoku Gakuin University, 
13-1, Chuo-1, Tagajo, 985-8537, Japan e-mail: (kaminaga@mail.tohoku-gakuin.ac.jp).}
\thanks{
}}

%
%

\markboth{
}%
{Kaminaga \MakeLowercase{\textit{et al.}}: Double Counting in $2^t$-ary RSA Precomputation Reveals the Secret Exponent}
%



\maketitle

\begin{abstract} 
A new fault attack, double counting attack (DCA), 
on the precomputation of $2^t$-ary modular exponentiation for 
a classical RSA digital signature (i.e., RSA without the Chinese remainder theorem) 
is proposed. 
The $2^t$-ary method is the most popular and widely used 
algorithm to speed up the RSA signature process. 
Developers can realize the fastest signature process by choosing optimum $t$. 
For example, $t=6$ is optimum for a 1536-bit classical RSA implementation. 
The $2^t$-ary method requires precomputation to generate 
small exponentials of message. 
Conventional fault attack research has paid little attention to 
precomputation, even though precomputation could be a target of a fault attack. 
The proposed DCA induces faults in precomputation by using instruction skip technique, 
which is equivalent to replacing an instruction with a no operation in assembly language. 
This paper also presents a useful ``position checker" tool to determine the 
position of the $2^t$-ary coefficients of the secret exponent from signatures based on 
faulted precomputations. 
The DCA is demonstrated to be an effective attack method for some widely used parameters. 
DCA can reconstruct an entire secret exponent using the position checker with $63(=2^6-1)$ faulted signatures 
in a short time for a 1536-bit RSA implementation using the $2^6$-ary method. 
The DCA process can be accelerated for a small public exponent (e.g., 65537). 
The the best of our knowledge, the proposed 
DCA is the first fault attack against classical RSA precomputation. 
\end{abstract}


%

\section{Introduction}\label{intro}
Since the publication of Boneh, DeMillo, and Lipton's landmark paper\cite{Bellcore}, 
differential fault analysis (DFA) 
has been an active area of cryptography. 
DFA is a technique to extract secret information from
cryptographic device such as a smart card by provoking a computational fault. 
Faults are often caused by abnormal voltage, and clock signals\cite{LowCost}, 
or an optical flush \cite{Anderson}. 
Bar et al. \cite{Bar} and Osward and F.-X. Standaert (Chapter 1 in \cite{Joye-Tunstall-Book}) 
are good guides to practical application of fault attacks. 
Many researchers are developing DFA techniques for 
RSA signature schemes.
RSA is the most popular digital signature scheme. 
For signing a message $m$, 
the signer computes the signature $S=\mu(m)^d~\Mod N$ using an 
encoding function $\mu$, where $N=pq$ with distinct primes $p$ and $q$. 
In this paper, we consider a deterministic encoding function such as 
PKCS\#1 v.1.5 \cite{PKCS}, and a full domain hash scheme. 
To verify the signature $S$, the verifier checks the validity of 
the received signature $(S,m)$ with the public exponent $e$ 
to determine if $S^e\equiv \mu(m) (\Mod N)$ holds. 
There are two types of RSA signature implementations, 
classical RSA, i.e., RSA without the Chinese remainder theorem (CRT), 
and RSA-CRT, i.e., RSA with CRT. 
Traditionally, CRT has been used to speed up 
the RSA signing process; 
however, there are also many effective DFA techniques. 
Lenstra \cite{Lenstra} showed that the public modulus $N$ can be factored 
with only one signature/faulted signature pair.

On the other hand, in classical RSA implementations, obtaining the entire 
secret exponent and factorization of the public modulus with a few faulted signatures 
is more difficult. 
In many papers (e.g. Boneh-DeMillo-Lipton \cite{Bellcore}, Bao et al. \cite{Bao},
 and Yen-Joye \cite{SafeError}), 
the goal of the attacker is to gradually reconstruct the secret exponent 
rather than factor the public modulus. 
A unique exception is Seifert's attack \cite{Seifert}, which is 
based on the assumption that an attacker can induce faults as the device moves 
$N$ data from memory. Under this assumption, 
the attacker can create a new faulted modulus $\hat{N}$. 
Seifert pointed out that if $\hat{N}$ is prime, the attacker can compute 
a new secret exponent. 
Muir et al. \cite{Muir} simplified Seifert's attack and showed 
that the simplified version works under a relaxed condition, i.e., 
$\hat{N}$ can be factored. 
Under this condition, the attacker can obtain 
a new secret exponent $e^{-1}~\Mod \varphi(\hat{N})$, 
where, $\varphi$ is Euler's totient function.
Their attack methods are quite sophisticated; however, successful attack 
is entirely dependent on whether $\hat{N}$ can be factored.
In contrast, our attack method is not dependent on such a condition.

This paper proposes a new and effective fault attack, double counting attack (DCA)
targeting at precomputation process in $2^t$-ary modular exponentiation. 
Conventional fault attack reseach has paid little attention to precomputation in the 
$2^t$-ary method, which is an algorithm used to preform efficient calculations of modular exponentiation.
The proposed DCA is built on an instruction skip technique.
The instruction skip is equivalent to replacing an instruction with a no operation 
in assembly language. 
Let $\mathbb{Z}[a,b]$ be a set of integers in the interval $[a,b]$. 
We introduce a ``position checker" tool to determine 
the set ${\cal P}_{n,t,\ell}=\{j\in \mathbb{Z}[0,\lceil n/t\rceil-1]:d[t,j]=\ell\}$
for the $2^t$-ary representation of the $n$-bit secret exponent $d=\sum_{j=0}^{\lceil n/t\rceil -1}d[t,j](2^t)^j$. 
${\cal P}_{n,t,\ell}(\ell\in\mathbb{Z}[0,2^t-1])$ 
determines the entire secret exponent $d$ for typical choices of $n$ and $t$.
For example, DCA can reconstruct the entire secret exponent $d$ from 63 faulted signatures 
in a short time for $n=1536$ and $t=6$, 
which is the fastest parameter for $n=1536$.
The DCA process can be accelerated for small public exponents such as 65537.
To the best of our knowledge, the 
DCA is the first fault attack method targeting at the precomputation of the $2^t$-ary method. 

The remainder of this paper is organized as follows. 
Section II presents basic facts about the target implementation of DCA, 
the instruction skip technique, and a naive attack using the instruction skip.
Section III presents the position checker and describes the general principle 
and procedures of the DCA. 
Simulated attack results are presented in Section IV. 
An effective attack method for a small public exponent is described in Section V.
Several software countermeasures against DCA are presented in Section VI, 
and conclusions are presented in Section VII.

\section{Preliminaries}

\subsection{Modular Multiplication Coprocessor}
Coprocessors for smart card microcontrollers have been specifically designed 
to perform efficient calculations 
of public-key algorithms, such as modular multiplication 
in an RSA cryptosystem \cite{SmartcardHandbook}.
The RSA digital signature $S=M^d~\Mod N$ is performed using a 
modular multiplication ``$AB ~\Mod N$" coprocessor.
We assume that this coprocessor has three registers, $A$, $B$, and $N$. 
The coprocessor computes $AB ~\Mod N$ 
and writes the result to register $A$. 
Here, we assume this coprocessor has two modes, SQUARE [$A\leftarrow A^2 ~\Mod N$] 
and MULTIPLY [$A\leftarrow AB ~\Mod N$] (denoted ``SQARE enable" and ``MULTIPLY enable," respectively).
The coprocessor executes once when the enable bit is set to one by a specific instruction, 
such as a ``move" instruction. 
Some microcontrollers with coprocessors have a similar function. 
Since the microcontroller consumes a significant amount of power \cite{Joye-Tunstall-Book}, 
the attacker can easily distinguish the timing of modular multiplication 
from other instructions by performing a power consumption trace \cite{DPA}. 
This fact means that determining the timing of injected fault for 
skipping a conditional/unconditional 
branch instruction or increment/decrement instruction
is easier than it is for other DFA techniques. 
DCA is applicable to a microchip without a coprocessor 
if the timing of each $AB~\Mod N$ subroutine can be distinguished by the attacker. 
Here, we describe the attack for a case by using an $AB~\Mod N$ coprocessor; 
however, DCA also works for a modular multiplication coprocessor 
using the Montgomery method \cite{Mont}, i.e., 
compute $ABR^{-1} ~\Mod N(R=2^n)$ rather than $AB ~\Mod N$, 
where $N$ is $n$-bits long. 
In the Montgomery coprocessor, internal data are formatted as $\mbox{\text Mont}(A)=AR~\Mod N$.
\subsection{The $2^t$-ary method}
For classical RSA implementations, 
the $2^t$-ary exponentiation method is used to speed up the signing process. 
The $2^t$-ary method generalizes binary exponentiation and is based on $2^t$-ary 
representation of the exponent, which is expressed as follows: 
$$
d=\sum_{j=0}^{\lceil n/t\rceil-1}d[t,j](2^t)^j,  
$$
where $d[t,j]\in \mathbb{Z}[0,2^t-1]$. 
\begin{defn} The $2^t$-ary table ${\cal T}_{2^t}$ is defined as a set consisting of
$$
M^i~\Mod N \quad\mbox{for every $i\in \mathbb{Z}[1,2^t-1]$}.
$$
\end{defn}
The classical $2^t$-ary RSA implementation~\cite{HandbookofAppliedCryptography} is given in LIST I.
In the precomputation in List I, 
the microcontroller computes all values of ${\cal T}_{2^t}$ and stores them in RAM. 
The $2^t$-ary modular exponentiation uses the values in the table ${\cal T}_{2^t}$ 
after precomputation.
We discuss some stochastic properties of this precomputation in Section \ref{Stoc}. 
Here, we describe the value $t$ minimizing the execution time of $2^t$-ary modular exponentiation.

\begin{center}
{\small LIST I} \\
{\small Classical $2^t$-ary RSA implementation}
\end{center}
{\footnotesize
\begin{verbatim}
Input: M, n, t, d, N
Output: S = M^d mod N
# precomputation
Compute M^i mod N(i=1,...,2^t-1) 
        and store them in RAM
j = ceiling(n/t)-1
MOVE A <- 1
Loop:
  If j < 0 then  
    Break and Return A
  SQUARE enable (t times)
    [A <- A*A mod N: t times]
  If d[t,j] is not equal to 0 then
    MOVE M^d[t,j] mod N to register B
    MULTIPLY enable ---(*)
    [A <- A*B mod N]
  decrement j 
  goto Loop
\end{verbatim}
}

It is natural to assume that the average execution time of $A^2 ~\Mod N$ is equal to 
that of the $AB ~\Mod N$ coprocessor.
Then, the total execution time of the modular exponentiation using the $2^t$-ary method is proportional to the 
number of executions of the $AB ~\Mod N$ coprocessor. 
Since the number of MULTIPLY instructions required to generate table ${\cal T}_{2^t}$ 
is $2^t-1$, 
the number of SQUARE instructions is always $n$, and the average number of MULTIPLY 
instructions is $\frac{n}{t}\left(1-\frac{1}{2^t}\right)$. 
Therefore, we can estimate the average execution time $\tau(n,t)$ from the 
average execution time of MULTIPLY as follows:
$$
\tau(n,t) = (2^t-1)+n+\frac{n}{t}\left(1-\frac{1}{2^t}\right).
$$
Here, $\tau(n,t)$ is a convex function in $t$. 
We show $\tau(n,t)/n$ for $n=1024, 1536$, and $2048$ in Fig.\ref{windowsize}.
\begin{figure}[t]
 \begin{center}
  \includegraphics[width=80mm]{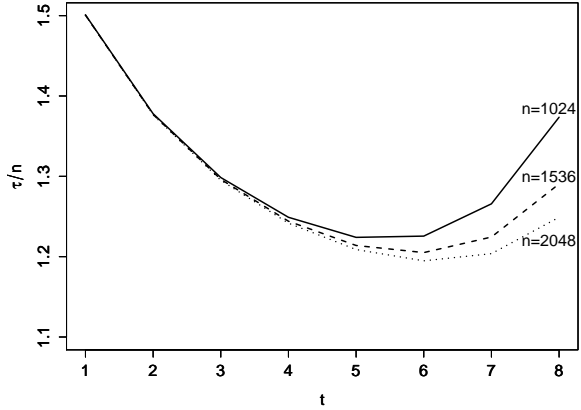}
 \end{center}
 \caption{$\tau(n,t)/n$ for $n=1024$, $1536$, and $2048$}
 \label{windowsize}
\end{figure}
\begin{table}[t]
\begin{center}
\scriptsize
\caption{$\tau(n,t)$ for $n=1024$, $1536$, and $2048$}\label{averagetime}
  \begin{tabular}{|c||c|c|c|c|c|c|c|c|}
  \hline
$t$  &    3     &     4    &     5    &     6    &     7    &     8    \\
\hline
\hline
$\tau(1024,t)$ & 1329.7 & 1279.0 & {\bf 1253.4} &  1255.0 & 1296.1  & 1406.5  \\
\hline
$\tau(1536,t)$ & 1991.0 & 1911.0 & 1864.6 & {\bf 1851.0} & 1880.7 & 1982.3 \\ 
\hline
$\tau(2048,t)$ & 2652.3 & 2543.0 & 2475.8 & {\bf 2447.0} & 2465.3 & 2558.0 \\ 
\hline
  \end{tabular}
\end{center}
\end{table}
We also show precise values of $\tau(n,t)$ for $n=1024, 1536$, and $2048$ in Table \ref{averagetime}.
$\tau(1024,t)$, $\tau(1536,t)$, and $\tau(2048,t)$ 
take a minimum at $t=5, 6,$ and $6$, respectively. 
Note that there is very little difference between $\tau(1024,5)$ and $\tau(1024,6)$. 
The actual execution time depends on the length of the exponent $n$ and $t$, which 
is restricted by hardware specifications such as the amount of RAM available. 
Most developers of RSA primitives are likely to choose an optimum or near-optimum value for $t$.

\subsection{Instruction Skip Technique}
Most DFA papers assume that injected faults affect several bits of the internal data. 
However, several researches have investigated DFA using an instruction skip, or a bypass operation 
\cite{RoundReduction}, \cite{park}, \cite{yoshikawa1}. 
The instruction skip does not affect the registers, internal memory, and calculation process.
Successful instruction skip attacks have been reported for the 
PIC16F877~\cite{RoundReduction}, ATmega 128~\cite{park}, 
and ATmega 168~\cite{yoshikawa1} microcontrollers. 
Choukri-Tunstall~\cite{RoundReduction} and Park et al.~\cite{park} showed that 
an entire Advanced Encryption Standard secret key could be 
reconstructed by skipping a branch instruction used to 
check the number of rounds. 
Yoshikawa et al.~\cite{yoshikawa1} described a method 
to reconstruct the entire secret key of 
several ciphers of a generalized Feistel network, such as CLEFIA, 
by skipping the increment or decrement 
instruction used to count the number of rounds.

\subsection{Naive Attack Using an Instruction Skip}\label{Naive}
The simplest attack on RSA using an instruction skip attacks 
the $j$-th MULTIPLY enable command (*) for every $j\in \mathbb{Z}[0,\lceil n/t\rceil-1]$.
When the $j$-th MULTIPLY enable command is skipped, the faulted result can be represented by
$$
\hat{S}_j = M^{d-d[t,j](2^t)^j}\Mod N.
$$
Therefore, we obtain 
$$
S\hat{S}_j^{-1}\equiv M^{d[t,j](2^t)^j}(\Mod N).
$$
Then, the attacker determines the $j$-th coefficient $d[t,j]$ 
by comparing $S\hat{S}_j^{-1}\Mod N$ with values of 
$M^{\ell(2^t)^j}~\Mod N$ for $\ell\in \mathbb{Z}[0,2^t-1]$. 
Essentially, the same attack on a binary RSA implementation was 
proposed by Kaminaga et al.~\cite{KaminagaDFA}.
This attack requires $\lceil n/t\rceil-1$ fault injections.
For example, the attacker needs 255 fault injections for the case $n=1536$ and $t=6$.
From the attacker's point of view, fewer fault injections are desirable because 
an actual attack generally requires careful timing of injection. 
DCA requires fewer faults than the given naive example.

\section{Proposed Attack}

\subsection{Position Checker}
The central idea of DCA is to determine the position set 
${\cal P}_{n,t,\ell}=\{j\in \mathbb{Z}[0,\lceil n/t\rceil-1]: d[t,j] = \ell\}$ 
using the position checker. 
Here, we define the position exponent and position checker. 

\begin{defn}
For any subset ${\cal A}$ of $\mathbb{Z}[0,\lceil n/t\rceil-1]$, we define 
$$
d[t,{\cal A}]=\sum_{j\in {\cal A}}(2^t)^j.
$$
Here, $d[t,{\cal A}]=0$ whenever ${\cal A}=\emptyset$.
The position exponent for coefficient $\ell$ is defined by $d[t,{\cal P}_{n,t,\ell}]$.
\end{defn}
An example of position exponents for $t=2$ is represented in Fig. \ref{position_cheker}.
The exponent $d$ can be recovered from $\{d[t,{\cal P}_{n,t,\ell}], \ell\in \mathbb{Z}[0,2^t-1]\}$ as
$$
d = \sum_{\ell=0}^{2^t-1}\ell d[t,{\cal P}_{n,t,\ell}].
$$

\begin{defn}
Let $\mathbb{Z}_N^*=(\mathbb{Z}/N\mathbb{Z})^*$ be the reduced residue system mod $N$.
For every $\ell\in \mathbb{Z}[0,2^t-1]$, 
the position checker for coefficient $\ell$ is defined by 
the map from $\mathbb{Z}_N^*$ to itself:
$$
C_{t,\ell}(x) = x^{d[t,{\cal P}_{n,t,\ell}]} ~\Mod N.
$$

\end{defn}

\begin{figure}[htbp]
 \begin{center}
  \includegraphics[width=80mm]{position_checker2.eps}
 \end{center}
 \caption{Position exponent for $t=2$. Blank cells denote zero.}
 \label{position_cheker}
\end{figure}

\subsection{Stochastic Properties of the $2^t$-ary Table}\label{Stoc}
In this section, we describe the stochastic properties of table ${\cal T}_{2^t}$.
The target process of DCA is the precomputation.
We observe that the number of $2^t$-ary coefficients in $d$ 
is equal to $\lceil n/t\rceil$, 
and $M^j~\Mod N$ for every $j\in \mathbb{Z}[1,2^t-1]$ is stored in RAM.
Let $X_{n,t,\ell}=\sharp{\cal P}_{n,t,\ell}$, 
where the cardinality of set ${\cal A}$ is denoted $\sharp{\cal A}$. 
$2^t$-tuples of random variables $(X_{n,t,0}, \cdots, X_{n,t,2^t-1})$ 
obey multinomial distribution with probabilities 
$\mathbb{P}(X_{n,t,0}=m)=2^{-t}, \cdots, \mathbb{P}(X_{n,t,2^t-1}=m)=2^{-t}$ 
for each $m\in\mathbb{Z}[0,\lceil n/t\rceil-1]$.
Here, we only consider the $\ell$-independent stochastic properties of $X_{n,t,\ell}$; 
therefore, we omit $\ell$ of $X_{n,t,\ell}$. 
Thus, the expectation and variance of $X_{n,t}$ are given by 
$$
\mathbb{E}[X_{n,t}] = \frac{n}{t2^t},\quad \Var[X_{n,t}]=\frac{n}{t2^t}\left(1-\frac{1}{2^t}\right),
$$
respectively.
Its probability histogram is of approximately normal distribution for small $t$,  
whereas $\mathbb{E}[X_{n,t}]$ is large.

On the other hand, the histogram is strongly skewed to the left for relatively large $t$ 
and can be approximated by Poisson distribution.
The histogram of $X_{n,t}$ for $t=5$ and $t=6$ are shown in Fig. \ref{dist}.

\begin{figure}[htbp]
 \begin{center}
  \includegraphics[width=80mm]{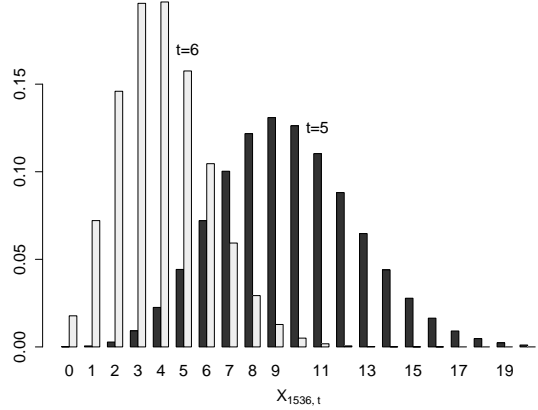}
 \end{center}
 \caption{Distribution of $X_{1536,t}$ for $t=5$ and $t=6$}
 \label{dist}
\end{figure}

Consider the relationship between the binary logarithm of the number of positions 
given by
$$
V_{k,t}=\log_2\left(\sum_{z=0}^k
{\lceil n/t\rceil \choose z}
\right)
$$
and the cumulative probability $\mathbb{P}(X_{n,t}\leq k)$. 
Here, $V_k$ is the bit size of the attacker's target search space.
We plot points $\{(V_{k,t},\mathbb{P}(X_{1536,t}\leq k)):k\geq 0\}$ 
for $t=4, 5$, and $6$ in Fig. \ref{positions}.
\begin{figure}[t]
 \begin{center}
  \includegraphics[width=80mm]{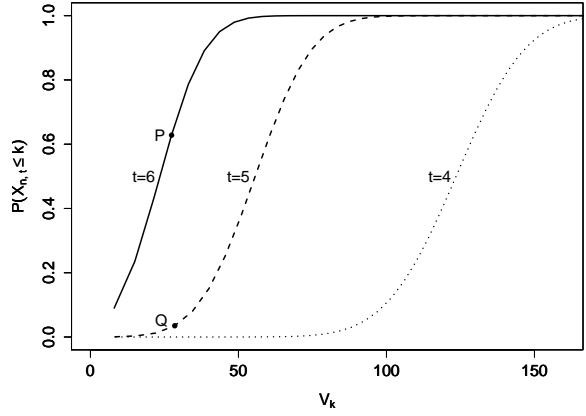}
 \end{center}
 \caption{Relation between $V_k$ and $\mathbb{P}(X_{n,t}\leq k)$ for $n=1536$}
 \label{positions}
\end{figure}
For example, for $t=6$, 
$$
(V_{4,6},\mathbb{P}(X_{1536,6}\leq 4)) = P(27.4,0.629).
$$
This $V_{4,6}$ is sufficiently small for an attacker to search all 
combinations of position by brute force.
The probability $\mathbb{P}(X_{n,t}\leq 4)$ is quite sensitive to $t$. 
As a matter of fact, for $t=5$, 
$$
(V_{4,5},\mathbb{P}(X_{1536,5}\leq 4)) = Q(28.5,0.0351).
$$
This means that the bit size of search space $V_{4,5}$ grows twice of $V_{4,6}$. 
However, obtaining coefficients mere $3.5\%$.

\subsection{Generating Position Checker from Faulted Signatures}\label{genposck}
Here, we describe how to generate the position checker from faulted signatures.
We first call the standard precomputation algorithm (List II).
The table ${\cal T}_{2^t}$ generated according to 
the following recursive relation.
$$
M^j=M^{j-1}M ~\Mod N \mbox{ for $j\in \mathbb{Z}[1,2^t-1]$}
$$

If the MULTIPLY enable command (*) at $j=k$ is skipped in List II, 
the microcontroller generates a faulted table given by
$$
M, M^2,\cdots ,M^{k-1},M^{k-1}\cdots , M^{2^t-2}.
$$
In this faulted table, $M^{k-1}$ appears twice, and $k-1$ is double counted.
The $j$-th value in this faulted table can be represented by 
$M^j~\Mod N$ for $j \in \mathbb{Z}[1,k-1]$, 
and $M^{j-1}~\Mod N$ for $j \in \mathbb{Z}[k,2^t-1]$.

\begin{center}
{\small LIST II} \\
{\small Precomputation for $2^t$-ary RSA implementation}
\end{center}
{\footnotesize
\begin{verbatim}
Input:M,t,N
Output: M[j] = M^j mod N, j=1,2,...,2^t-1
A <- 1, B <- M, j = 1
Loop:
  If j >= 2^t then 
    break and return 1, M mod N,..., M^(2^t-1) mod N
  MULTIPLY enable  ---(*)
       [A <- AB mod N]
  Move A -> RAM as M^j mod N
  increment j
goto Loop
\end{verbatim}
}

Note that the size of the table does not vary by the attack.
We denote such a faulted table as $\hat{{\cal T}}_{2^t}(k)$.
The faulted signature $\hat{S}_{t,k}$ generated from $\hat{{\cal T}}_{2^t}(k)$ is given by
$$
\hat{S}_{t,k} = M^{d-d[t,\cup_{k\leq \ell\leq 2^t-1}{\cal P}_{n,t,\ell}]}~\Mod N.
$$
Therefore, we obtain the following position checker:
\begin{eqnarray*}
\hat{S}_{t,k+1}\hat{S}_{t,k}^{-1} &\equiv& M^{d[t,{\cal P}_{n,t,k}]} \nonumber \\
&\equiv& C_{t,k}(M)(\Mod N)\mbox{ for $k\in \mathbb{Z}[1,2^t-2]$.}
\end{eqnarray*}
Note that $\hat{S}_{t,2^t-1}$ is equal to the correct signature $S$, and we cannot 
determine $C_{t,0}$. 
Therefore, the set ${\cal P}_{n,t,0}$ is determined as the complement set of 
$\cup_{\ell\in\mathbb{Z}[1,2^t-1]}{\cal P}_{n,t,\ell}$.

\subsection{Attack Procedure}\label{attackproc}
The attacker determines the position set ${\cal P}_{n,t,k}$ using the position checker.
The attacker knows that there is one correct position for $k$; 
all other positions are incorrect.
The goal is to identify the correct position. 
The following steps provide an example of the DCA process.
Note that processes 2 and 3 are completely off-line attacks.
\begin{enumerate}
\item Collect $2^t-1$ faulted signatures $\hat{S}_k$ 
from the faulted table ${\cal \hat{T}}_{2^t}(k)$ for every $k\in \mathbb{Z}[1,2^t-1]$.
\item The attacker computes position checkers $C_{t,k}(M)$ from the relation 
$\hat{S}_{t,k+1}\hat{S}_{t,k}^{-1}\equiv C_{t,k}(M)(\Mod N)$ for every $k\in \mathbb{Z}[1,2^t-2]$.
\item The attacker chooses a position set ${\cal \tilde{P}}_{n,t,k}$ for $k$ and computes 
the following:
$$
\tilde{C}_{t,k}(M) = M^{d[t,{\cal \tilde{P}}_{n,t,k}]}~\Mod N.
$$
The attacker checks whether the value $\tilde{C}_{t,k}(M)$ is equal to the 
value of the position checker $C_{t,k}(M)$. 
If $\tilde{C}_{t,k}(M)=C_{t,k}(M)$, 
then it is highly probable that the position set $\tilde{{\cal P}}_{n,t,k}$ is 
equal to the correct position set ${\cal P}_{n,t,k}$. 
The attacker repeats this process for each $k\in \mathbb{Z}[1,2^t-1]$.
\end{enumerate}

\subsection{Value of Position Checker Determines the Position Exponent}
The basic question is whether the observed values of the 
position checker $C_{t,\ell}(M)$ for $M$ 
determine the position set ${\cal P}_{n,t,\ell}$ uniquely 
in attack process 3. 

Generally, the answer is no. 
It is possible that ${\cal \tilde{P}}_{n,t,\ell}\ne {\cal P}_{n,t,\ell}$ 
even though 
$M^{d[t,{\cal \tilde{P}}_{n,t,\ell}]}\equiv C_{t,\ell}(M)(~\Mod N)$ holds for some $M$.
For example, $C_{t,\ell}(1)=1$ for 
every ${\cal P}_{n,t,\ell}$, and $C_{t,\ell}(-1)=1$ for every even $d[t,{\cal P}_{n,t,\ell}]$.
In this case, the attacker cannot determine the position set ${\cal P}_{n,t,\ell}$ correctly using 
the value of $C_{t,\ell}(M)$ for such $M$.
This depends on the order of $M$ mod $N$, which is defined as 
the least positive integer $s$ that satisfies $M^s\equiv 1(\Mod N)$, denoted by $\ord_N(M)$. 
There are a few $M$ of small order because the number of 
roots of $x^{\ord_N(M)}\equiv 1(\Mod N)$ is at most $\ord_N(M)^2$. 
Here the exponent ``$2$" is from the number of factors of $N$. 
Carmichael's theorem states that $M^{\lambda(N)}\equiv 1(\Mod N)$ for every $M\in\mathbb{Z}_N^*$ and 
the order of $M$ mod $N$ is a factor of $\lambda(N)=\lcm(p-1,q-1)$. 

Suppose the attacker obtains a non-trivial pair of exponents such that 
$M^{s_1}\equiv M^{s_2}(\Mod N)$ with $s_1>s_2$. 
Then, $s_1-s_2$ is a factor of $\lambda(N)$. 
In many case, $s_1-s_2$ is large because there are a few $M$ of small order. 
Hence, $s_1-s_2$ gives the attacker information that is useful for factorizing 
the public modulus $N$.
Consequently, we expect that such a situation will not occur frequently.

As will be discussed in Section \ref{SIM}, 
from many situations, we did not obtained an incorrect 
estimate for the position exponent from 
a value of the position checker. 
Thus, it seems that its error probability 
is negligible. 
Therefore, the position checker works practically for most values of $M$.

\section{Attack Simulation}\label{SIM}

\subsection{Simulation Settings and Overview}
Computer simulations were performed based on the attack procedure 
presented in Section \ref{attackproc}. 
The simulation steps are as follows, where steps 1 and 2 are 
to be performed on the user side or by the coprocessor, 
and steps 3 - 5 are the attacker's process. 
Note that, from simple power analysis and supporting information, 
the attacker is assumed to know the value of $t$.
\begin{enumerate}
\item A secret exponent $d = e^{-1} ~\Mod (p-1)(q-1)$ 
is calculated after randomly generating two distinct prime numbers $p$ and $q$ of 
$n$ bits each. 
A message $M$ and a public exponent $e$ are also randomly generated.
\item The faulted signatures $\hat{S}_{t,k}$ are generated for 
$k\in\mathbb{Z}[1,2^t-1]$. 
Here, let the correct signature be denoted 
$\hat{S}_{t,2^t}$. 
These signatures are known by the attacker.
\item The attacker creates the position checkers $C_{t,k}(M)=\hat{S}_{t,k+1}\hat{S}_{t,k}^{-1}~\Mod N$ 
from faulted signatures $\hat{S}_{t,k}$ and $\hat{S}_{t,k+1}$ for $k\in\mathbb{Z}[1,2^t-1]$.
\item To determine $d$, $n$ bits of $d$ are divided for every $t$ bits 
from the least significant bit (LSB) to the most significant bit (MSB). 
A $t$-bit chunk is called a ``block" and is numbered 1 through 
$B=\lceil n/t\rceil$ from a block that includes the MSB to a block 
that includes the LSB.
\item The $d$-search algorithm described in Section \ref{d-search} is then performed. 
\end{enumerate}

\subsection{$d$-search algorithm}\label{d-search}

An example of the 
$d$-search algorithm is presented in LIST III, 
where a coefficient in the $b$-th block of $d$ is stored in $\mbox{de}[b]$.

\begin{center}
{\small LIST III} \\
{\small Search Process Example}
\end{center}
{\footnotesize
\begin{verbatim}
a)  // initialize
de[b] = -1 for b=1,2,c,B; 
coef[k] = -1 for k = 1,2,c,2^t-1
b)  // find coefficient coef[k] which appear once 
    // at de[ja] in B blocks.
  for (ja = 1; ja <=B; ja++) {
    if (de[ja] == -1) {
      expA = (2^t)^(B-ja);
      posId = powmod(M, expA, N);
      If (posId ¸ C_k(M)) {
        find k such that posID = C_k(M);
        de[ja] = k; coef[k] = 1;
      }
    }
  }
c)  // find coefficient coef[k] which appear 
    // two times at {de[ja], de[jb]} in B blocks.
  for (ja = 1; ja<=B; ja++) {
    if (de[ja] == -1) {
      expA = (2^t)^(B-ja);
      for (jb = ja+1; jb<=B; jb++) {
        if (de[jb] == -1 {
          expB = expA + (2^t)^(B-jb);
          posId = powmod(M, expB, N);
          If (posId ¸ C_k(M)) {
            find k such that posID = C_k(M);
            de[ja] = de[jb] = k; coef[k] = 2;
            goto nxt;
          }
        }
      }
    }
    label nxt;
  }
d)  find coefficient coef[k] which appear up to 
    Lmt (<=B) times at {de[ja], de[jb], .., de[Lmt]} 
    for k = 1,2,c,2^t-1
e)  // the remaining positions are for coefficient 0.
if b satisfies de[b] = -1, then let de[b] = 0;
\end{verbatim}
}

Here $\mbox{coef}[k]$ is a search flag. 
When the position of coefficient $k$ is found, $\mbox{coef}[k]$ becomes $1$, 
otherwise it becomes $-1$.
The $\mbox{de}[b]$ for $b\in\mathbb{Z}[0,2^t-1]$ is 
the secret exponent at the end of the search process.
The background of this algorithm is as follows.
The probability of one coefficient having $z$ blocks in $d$ is given by
$$
p(B,z) = {B \choose
z}\left(\frac{1}{w}\right)^z\left(1-\frac{1}{w}\right)^{B-z},
$$
where $w=2^t$.
Let the expected number of coefficients having $z$ positions each be
$\overline{W}_z$,
and the expected number of occupied positions by all coefficients having $z$ blocks
each is $\overline{B}_z$.
These expectations are given by $\overline{W}_z=wp(B,z)$ and
$\overline{B}_z=zw p(B,z)$.
If the coefficient $0$ is excluded for these expectations, they become 
$\overline{W}_z'=(w-1)p(B,z)$ and $\overline{B}_z'=z(w-1)p(B,z)$.
On the other hand, $z$ blocks can be chosen from $B$ blocks by ${B \choose z}$.
Therefore, the probability to pass the position check for a
combination of $z$ blocks in $B$
blocks is given by
$$
\overline{W}_z'/{B \choose z}=\frac{(w-1)^{B-z+1}}{w^B}
$$
which decreases when $z$ becomes large.
This means that we should first check positions for $z=1$ for quick results.
Then, we can determine the average $\overline{B}_1'$ positions. 
The coefficients having two positions each should be searched,
and the probability to pass the position check is
$$
\frac{(w-1)^{B-\overline{B}_1'-1}}{w^{B-\overline{B}_1'}}.
$$
In the same manner, next search is for coefficients having an 
incremental number of positions are then searched.

\subsection{Simulation Results}
The experiment was performed on a computer with a 16-core(XENON) 3.8 GHz CPU and 256 GB of memory.
We implemented the attack presented in LIST III using 
the Number Theory Library \cite{NTL} developed by Victor Shoup.
In this experiment, we measured the processing time required 
to derive the secret exponent $d$.
Attack simulation was performed for $(n,t) = (1536, 6)$ $101$ times, 
where $d$, $e$, and $M$ were randomly generated $101$ times. 
All estimation results for each $d$ were confirmed to be correct. 
The median and average CPU times required for the attack were approximately 
962 and 1120 min, respectively. 
The attack time for the worst case was 5189 min, and the best attack time was 249 min.

\section{Small $t$ case}
Smaller $t$ makes our attack difficult because 
the size of the position set ${\cal P}_{n,t,\ell}$ for most $\ell\in \mathbb{Z}[0,2^t-1]$ 
is too large to search by brute force.
Under such an unfavourable attack condition 
the attacker can determine the position of several ``rare" coefficients of the secret exponent $d$. 
However, for a small $t$ case, the attacker generally requires significant time 
to determine the complete secret exponent $d$ even for $t=5$ and $n=1024$.

On the other hand, for a small $t$ case, the attacker can reconstruct the complete $d$ 
if the public exponent $e$ is relatively small. 
A small public exponent $e$ (e.g., $65537=2^{16}+1$) is 
often used to speed up signature verification.
By observing the RSA equation 
$$
ed-k\varphi(N)=1,
$$
we have 
$$
d = \frac{k}{e}\varphi(N)+\frac{1}{e}=\frac{k}{e}(N-s)+\frac{1}{e},
$$
where $s=p+q-1$. Suppose $p$ and $q$ are balanced, i.e., $p<q<2p$. 
Then, we have $s=p+q-1<3\sqrt{N}$.
Note that $k$ takes a value that satisfies $\gcd(e,k)=1$ and $k\in\mathbb{Z}[1,e-1]$ 
because $d<\varphi(N)$. 
The length of $s$ is approximately half that of $N$. 
Therefore, it is highly probability that 
the upper bits of $d$ will be equal to the upper bits of $kN/e$ when $e$ is small.
If the attacker knows some partial coefficients in the DCA process, 
they can determine $k$ uniquely by comparing the upper bits of $kN/e$ with those coefficients. 
Thus, the attacker can determine the upper bits of $d$ after determining a unique $k$. 
In this manner, the DCA process can be accelerated.
For example, for $e=65537$, the attacker can obtain the upper half bits of $d$ by 
$e-1=65536$ trials at most. 
The attacker must only search the coefficients using the proposed 
attack method for most of the lower half bits of $d$.
For the case $n=1024$ and $t=5$, 
we expect that more than half the coefficients can be searched in a short time.
Our simulation shows that all coefficients of $d$ can be searched 
within a few days by Mathematica 9 on a Windows 7 PC (2.3 GHz).

\section{Software countermeasures against DCA}

\subsection{Small $t$ and large $e$}
A simple and effective countermeasure against DCA 
is to adopt small $t$ and $e$ of length that is approximately equal to the length of $N$. 
For instance, consider the case where $t=4$ and $N$ is 1536-bits long. 
In this case, 
the number of positions for the same coefficient $\leq 20$ is approximately $2^{109.9767}$. 
In this case, it is difficult for the proposed DCA alone to find the complete secret exponent.

In addition, this countermeasure reduces RAM space for the table.
Clearly, RAM space is reduced by half, and, 
as can be seen in Table \ref{averagetime}, 
the average computation time increases by approximately 3.2\% 
when $t$ is reduced from $6$ to $4$. 

If a chip developer adopts this countermeasure, 
the choice of $t$, $e$, and the length of $N$ must be considered very carefully.
For example, the attacker can determine the secret exponent using DCA in a short time 
for a 512-bit $N$ even if $t=4$. 

The attacker can determine several coefficients using the naive attack 
described in Section \ref{Naive} 
if the chip does not provide other countermeasures against fault attacks.

\subsection{Execution Redundancy}
Execution redundancy is the repeatition of computations and 
comparison of results to verify that the 
correct result is generated~\cite{Bar}. 
The simplest execution redundancy is 
recomputation, i.e., performing a computation twice on the same hardware. 
This countermeasure executes the same computation twice and compares the first and second results. 
If these results are not equal, then a fault is detected. 
The recomputation consists of precomputation and exponentiation processes.
Generally, in order to accelerate of recomputation, only exponentiation process
executes twice.
However, our attack works if the chip executes precomputation only once.
Therefore, precomputing twice are indispensable to prevent our attack.

Another effective countermeasure is an inverse computation, i.e., 
the encryption and comparison of $M$ to an encrypted signature~\cite{Anderson}. 
This is not only effective but also is faster than the recomputation 
countermeasure if the public exponent is relatively small such as 65537.
A condition by which an inverse computation works successfully is given in Theorem \ref{workcond}.
\begin{thm}\label{workcond}
Pick $M\in\mathbb{Z}_N^*$. 
Suppose $\ord(M)$ is not a common divisor 
of $\varphi(N)$ and $d[t,\cup_{k\leq \ell\leq 2^t-1}{\cal P}_{n,t,\ell}]$ 
for every $k\in\mathbb{Z}[1,2^t-1]$.
Then, for some $k$, the inverse computation detects the faulted signatures $\hat{S}_{t,k}$ 
created from faulted table ${\cal \hat{T}}_{2^t}(k)$ with message $M$. 
\end{thm}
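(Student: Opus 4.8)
The plan is to argue by contraposition: I will assume the inverse computation fails to flag \emph{every} faulted signature $\hat{S}_{t,k}$, $k\in\mathbb{Z}[1,2^t-1]$, and deduce that $M\equiv 1\pmod N$, which is incompatible with the hypothesis.

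First I would pin down exactly when the inverse computation accepts a faulted signature. Writing $\delta_k=d[t,\cup_{k\le\ell\le 2^t-1}{\cal P}_{n,t,\ell}]$, the excerpt gives $\hat{S}_{t,k}=M^{d-\delta_k}\Mod N$. The countermeasure recomputes $\hat{S}_{t,k}^e\Mod N$ and compares it with $M$. Using the RSA relation $ed=1+c\varphi(N)$ for some integer $c$ together with Euler's theorem $M^{\varphi(N)}\equiv 1\pmod N$ (valid since $M\in\mathbb{Z}_N^*$), I obtain
$$
\hat{S}_{t,k}^e \equiv M^{(d-\delta_k)e} \equiv M^{1+c\varphi(N)-\delta_k e} \equiv M\cdot M^{-\delta_k e}\pmod N .
$$
Hence the check $\hat{S}_{t,k}^e\equiv M$ passes (no detection at $k$) if and only if $M^{\delta_k e}\equiv 1\pmod N$, i.e.\ if and only if $\ord_N(M)\mid \delta_k e$.

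The combinatorial heart of the argument is the identity $\sum_{k=1}^{2^t-1}\delta_k=d$. This follows by exchanging the order of summation: since $\delta_k=\sum_{j:\,d[t,j]\ge k}(2^t)^j$, we get $\sum_{k=1}^{2^t-1}\delta_k=\sum_j (2^t)^j\,\sharp\{k\in\mathbb{Z}[1,2^t-1]:k\le d[t,j]\}=\sum_j (2^t)^j\,d[t,j]=d$, using $d[t,j]\le 2^t-1$. Now suppose detection fails for every $k$. By the criterion above, $\ord_N(M)\mid \delta_k e$ for all $k$, so $\ord_N(M)$ divides $e\sum_{k}\delta_k=ed=1+c\varphi(N)$. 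Since $\ord_N(M)\mid\lambda(N)\mid\varphi(N)$ by Carmichael's theorem, $\ord_N(M)$ divides $ed-c\varphi(N)=1$, forcing $\ord_N(M)=1$, i.e.\ $M\equiv 1\pmod N$.

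Finally I would close the loop: if $M\equiv 1$ then $\ord_N(M)=1$ trivially divides both $\varphi(N)$ and every $\delta_k$, so $\ord_N(M)$ is a common divisor of $\varphi(N)$ and $\delta_k$ for every $k$, contradicting the hypothesis. Therefore detection cannot fail at all $k$, which is exactly the claimed conclusion. The step I expect to be the main obstacle — and the one that makes the stray factor $e$ harmless — is the summation identity $\sum_k\delta_k=d$: it collapses the $2^t-1$ separate divisibility conditions $\ord_N(M)\mid\delta_k e$ into the single relation $\ord_N(M)\mid ed$, after which the RSA congruence $ed\equiv 1\pmod{\varphi(N)}$ finishes the job. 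A secondary point to verify carefully is that the encoding is handled so that the verifier indeed compares against $M$, guaranteeing the clean cancellation $M^{ed}\equiv M$; once that is granted, the remaining manipulations are routine.
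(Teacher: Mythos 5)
Your proof is correct, and it diverges from the paper's at the decisive step. Both arguments begin identically: write $\delta_k=d[t,\cup_{k\le\ell\le 2^t-1}{\cal P}_{n,t,\ell}]$, compute $\hat{S}_{t,k}^e\equiv M\cdot M^{-e\delta_k}\pmod N$, and observe that non-detection at $k$ is equivalent to $\ord_N(M)\mid e\delta_k$. The paper then cancels $e$ \emph{separately for each $k$}: since $\ord_N(M)\mid\varphi(N)$ and $\gcd(e,\varphi(N))=1$, also $\gcd(e,\ord_N(M))=1$, so $\ord_N(M)\mid\delta_k$ for every $k$; thus $\ord_N(M)$ is a common divisor of $\varphi(N)$ and every $\delta_k$, which negates the hypothesis and closes the contraposition. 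You instead sum over $k$, using the identity $\sum_{k=1}^{2^t-1}\delta_k=d$ (correct, and proved cleanly by counting $\sharp\{k:k\le d[t,j]\}=d[t,j]$), to get $\ord_N(M)\mid e\sum_k\delta_k=ed=1+c\varphi(N)$ and hence $\ord_N(M)\mid 1$. Your conclusion is strictly stronger than what the paper extracts: simultaneous non-detection at all $k$ forces $M\equiv 1\pmod N$, so the countermeasure catches the fault for \emph{every} message $M\not\equiv 1$ in $\mathbb{Z}_N^*$, and the theorem's order-theoretic hypothesis serves only to exclude the trivial $M$. What the paper's per-$k$ cancellation buys in exchange is local information: a single non-detection at one index $k$ already yields $\ord_N(M)\mid\delta_k$, whereas your summation needs all $2^t-1$ non-detections at once before it says anything. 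Both are complete proofs of the stated theorem.
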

\begin{proof}
As discussed in Section \ref{genposck}, the faulted signature 
created from faulted table ${\cal \hat{T}}_{2^t}(k)$ with message $M$ is represented as
$$
\hat{S}_{t,k} = M^{d-d[t,\cup_{k\leq \ell\leq 2^t-1}{\cal P}_{n,t,\ell}]}~\Mod N.
$$
Then, $\hat{S}_{t,k}^e$ is given by
\begin{eqnarray*}
\hat{S}_{t,k}^e &\equiv& 
M^{ed-ed[t,\cup_{k\leq \ell\leq 2^t-1}{\cal P}_{n,t,\ell}]} \\
&\equiv& M\cdot M^{-ed[t,\cup_{k\leq \ell\leq 2^t-1}{\cal P}_{n,t,\ell}]}(\Mod N)
\end{eqnarray*}
If the inverse computation does not detect this fault, i.e., $\hat{S}_{t,k}^e \equiv M$ holds, then,
$$
M^{ed[t,\cup_{k\leq \ell\leq 2^t-1}{\cal P}_{n,t,\ell}]}\equiv 1(\Mod N).
$$
Thus, $\ord(M)$ is a common divisor of $ed[t,\cup_{k\leq \ell\leq 2^t-1}{\cal P}_{n,t,\ell}]$ for every $k\in\mathbb{Z}[1,2^t-1]$. 
Since $\ord(M)$ is a factor of $\varphi(N)$ and $\gcd(e,\varphi(N))=1$, 
$\ord(M)$ is a common divisor 
of $\varphi(N)$ and $d[t,\cup_{k\leq \ell\leq 2^t-1}{\cal P}_{n,t,\ell}]$ 
for every $k\in\mathbb{Z}[1,2^t-1]$, which gives the desired result.
\end{proof}

\subsection{Exponent randomization}
The exponent randomization method was proposed by P. Kocher~\cite{timingattack} 
in 1996 to defeat power analysis.
The exponent randomization picks a random integer $r$ 
and computes a digital signature with 
$\tilde{d}=d+r\varphi(N)$ rather than the raw $d$. 
Coron~\cite{coron} extended this countermeasure to an elliptic curve cryptosystem.
For an RSA case, this is an effective countermeasure against numerous fault attacks. 
For example, the safe error attack developed by Yen-Joye~\cite{SafeError}
does not work if the card developer adopts exponent randomization 
because the attack uses bit-flipping of the secret exponent. 
Exponent randomization also works for other fault attacks [\cite{Bao}, \cite{Bar}] 
by relying on the stationary nature of the secret exponent.
Berzati et al.~\cite{Berzati} developed an effective fault attack against 
exponent randomization. 
Their attack succeeds from approximately 1000 
faulty signatures for a 1024-bit RSA signature algorithm.
Nevertheless, exponent randomization makes it difficult for 
their fault attacks to reconstruct the complete secret exponent.
On the other hand, precomputation is independent of the exponent; 
therefore, this countermeasure has no adverse effect on the proposed DCA.

\section{Conclusion}
In this paper, we have introduced a position checker to determine 
the position set ${\cal P}_{n,t,{\ell}}$ for 
$2^t$-ary modular exponentiation and have proposed a double counting attack 
that uses the proposed position checker against a $2^t$-ary RSA signature. 
The attack is the first fault attack based on an instruction skip 
against precomputation. 
We have also provided an estimate of the probability of 
uniquely determining the position exponent from 
a value obtained by the position checker. 
We also performed attack simulations for several parameters, including 1536-bit RSA 
with the $2^6$-ary method. 
In addition, if the public exponent is small, search time is reduced. 
From several attack simulations for public exponent $e=65537$, we have also shown that 
1024-bit RSA with the $2^5$-ary method can be broken in a short time.

Empirically, we did not obtain any incorrect estimates 
for the position exponent from a value of the 
position checker in many simulations. 
Thus, it appears that the error probability of the proposed method 
is significantly small. 
Moreover, we have also discussed efficacy of several software countermeasures. 
We have concluded that neither 
recomputation with a single precomputation nor the use of a randomized exponent 
have adverse effect against our attack.

\section*{acknowledgements}
This work was supported by JSPS KAKENHI Grant Number 25330157.

\end{document}